\documentclass[letterpaper, 10 pt, conference]{ieeeconf}

\bstctlcite{bstctl:etal, bstctl:nodash, bstctl:simpurl}

\usepackage{anyfontsize}
\usepackage{array} 
\usepackage[normalem]{ulem}
\usepackage{stfloats}   
\usepackage{mathtools,balance}
\usepackage{xcolor}
\usepackage[utf8]{inputenc}
\usepackage{tabularx}
\usepackage{booktabs} 
\usepackage{mathrsfs,stmaryrd}
\usepackage{graphics} 

\usepackage{amsmath} 
\usepackage{amssymb}  
\usepackage{amsthm}
\usepackage{cite}
\usepackage{bm}
\usepackage{acronym}
\usepackage{paralist}
\usepackage{float}
\usepackage[dvipsnames]{xcolor}
\usepackage{epstopdf}
\usepackage{multicol}
\usepackage{tikz}
\usepackage{hyperref}
\usepackage{graphicx}
\usepackage{soul}
\usepackage{macros} 

\usepackage{stfloats}
\usepackage{algorithm,algpseudocode,algorithmicx}
\fnbelowfloat

 \newcommand{\abby}[1]{{\color{red} [\underline{\bf AE:} #1]}}
  
\makeatletter
\hypersetup{colorlinks=true}
\AtBeginDocument{\@ifpackageloaded{hyperref}
  {\def\@linkcolor{blue}
  \def\@anchorcolor{red}
  \def\@citecolor{red}
  \def\@filecolor{red}
  \def\@urlcolor{black}
  \def\@menucolor{red}
  \def\@pagecolor{red}
\begingroup
  \@makeother\`%
  \@makeother\=%
  \edef\x{%
    \edef\noexpand\x{%
      \endgroup
      \noexpand\toks@{%
        \catcode 96=\noexpand\the\catcode`\noexpand\`\relax
        \catcode 61=\noexpand\the\catcode`\noexpand\=\relax
      }%
    }%
    \noexpand\x
  }%
\x
\@makeother\`
\@makeother\=
}{}}
\makeatother

\DeclareMathOperator{\R}{\mathbb R}

\newcommand{\bequ}{\begin{eqnarray}}
\newcommand{\eequ}{\end{eqnarray}}

\IEEEoverridecommandlockouts

\def\BibTeX{{\rm B\kern-.05em{\sc i\kern-.025em b}\kern-.08em
    T\kern-.1667em\lower.7ex\hbox{E}\kern-.125emX}}

\title{\LARGE \bf
{Optimal Dynamic Resource Allocation for Multicore Real-time Systems
}
}
\author{Georgiy A. Bondar, Abigail Eisenklam, Ripon C. Sarker, Linh Thi Xuan Phan, Abhishek Halder
\thanks{Georgiy A. Bondar is with the Department of Applied Mathematics, University of California Santa Cruz, Santa Cruz, CA 95064, USA, {\tt\footnotesize{gbondar@ucsc.edu}}. Abigail Eisenklam and Linh Thi Xuan Phan are with the Department of Computer and Information Science, University of Pennsylvania, PA 19104, USA, {\tt\footnotesize{\{aei,linhphan\}@cis.upenn.edu}}. Ripon C. Sarker and Abhishek Halder are with the Department of Aerospace Engineering, Iowa State University, Ames, IA 50011, USA, {\tt\footnotesize{\{rcsarker,ahalder\}@iastate.edu}}. 
}
\thanks{This research was partially supported by NSF awards CNS-2111688, CNS-1955670, and CCF-2326606.}
}

\IEEEoverridecommandlockouts
\begin{document}
\bstctlcite{IEEEexample:BSTcontrol}
\maketitle
\thispagestyle{empty}
\pagestyle{empty}

\begin{abstract}
We formulate a novel optimal control problem for data-driven dynamic resource allocation in multicore real-time systems. The proposed formulation accounts for hard deadline constraints and resource bounds. Under mild assumptions, we prove several results on the feasibility and existence of optimal solution for the proposed formulation. Building on these analyses, we design an algorithm to solve the multi-core dynamic resource allocation problem. We demonstrate the practical use of this new algorithm by experimentally  evaluating it on a real multicore platform using four benchmarks from the literature.
\end{abstract}
\section{Introduction}\label{sec:Introduction} 

\noindent\textbf{Motivation.} Real-time systems, such as those in automotive, robotic, and avionic applications, require predictable timing to operate safely. To meet growing computational demands, these systems are increasingly deployed on multicore platforms, where cores share hardware resources such as the last-level cache (LLC) and memory bus. Resource contention arising from this sharing can cause interference among concurrently executing programs, increasing their execution times and, in the worst case, causing deadline misses.

A common remedy is to \emph{statically} partition shared resources across cores~\cite{memguard, cat}, typically by solving a mixed-integer linear program (MILP) or an approximation thereof~\cite{xu2019holistic,sun2025multi}.
Because these allocations are static, they must be provisioned for the worst-case demand, even though a program's resource demand may vary over time. 
\emph{Dynamic allocation} addresses this limitation by redistributing shared-resource partitions across cores  at runtime in response to changing demand. Existing dynamic approaches \cite{singh2017survey}, however, rely on greedy heuristics~\cite{gifford2025rtas, eisenklam2025rasco}. While resource allocation for \emph{soft} deadline constraints has been studied before \cite{bondar2026generative}, to the best of our knowledge, no prior work computes \emph{optimal} dynamic allocations subject to \emph{hard} deadlines.

In this work, we formulate dynamic multicore resource allocation as a continuous-time continuous-state optimal control problem (OCP) subject to hard deadline constraints. 
 The objective of our OCP captures a system-level cost, such as energy consumption or provisioned resources, which is critical for embedded platforms with stringent size, weight, and power (SWaP) constraints. The continuous formulation pursued here is advantageous over discrete formulations in that it allows us to (i) circumvent computationally hard MILP formulations, and (ii) exploit vector field structures that discretize-then-optimize approaches cannot.
 

\noindent\textbf{Related work.}
Optimal control---both deterministic \cite{zhang2011model,ciocan2012model} and stochastic \cite{valian2016resource}---has been applied to general-purpose dynamic resource allocation. However, prior work does not address multicore real-time systems with hard deadlines. A separate line of work (see e.g., \cite{marti2004optimal,tabuada2007event}) considers resource allocation specifically for control tasks. Our focus, by contrast, is on general multicore tasks.

\noindent\textbf{Contributions.} This work makes the following specific contributions.
\begin{itemize}
    \item A bespoke OCP formulation of the dynamic multicore resource allocation with resource bounds and hard deadline constraints (Sec. \ref{sec:ModelAndProbFormulation}).

    \item Analysis of the proposed OCP (Sec. \ref{subsec:Analysis}), specifically its feasibility (Theorem \ref{thm:underapprox_feas}) and sufficient conditions for the existence of its solution (Theorem \ref{thm:suff_pw}, Proposition \ref{prop:obj_form}).

    \item A proposed algorithm (Algorithm \ref{alg:solution} in Sec. \ref{sec:Algorithm}) that builds on the analysis in Sec. \ref{subsec:Analysis}, and its evaluation (Sec. \ref{sec:Numerics}) using benchmarks (Table \ref{tab:benchmarks}) from the San Diego Vision Benchmark Suite~\cite{sd-vbs} on a real multicore platform.

\end{itemize}


\section{Model and Problem Formulation}\label{sec:ModelAndProbFormulation}
\noindent We consider a computer system with $s\in\mathbb{N}_{\geq 2}$ CPU cores, each executing a program (or task). We assume that each program executes a deterministic sequence of instructions. Let $\llbracket s\rrbracket:=\{1,\dots,s\}$ denote the task index set. Each task $i\in\llbracket s\rrbracket$ is released at a known time $t=t_{i}^{\text{start}}\geq 0$ and must complete by a known deadline $t=t_{i}^{\text{end}}$, where $t_{i}^{\text{start}}<t_{i}^{\text{end}}$. Assuming that the first task is released at $t=0$ and that all deadlines are finite, the system time window is the interval $$\left[\min_{i\in\llbracket s\rrbracket}t_{i}^{\text{start}},\max_{i\in\llbracket s\rrbracket}t_{i}^{\text{end}}\right] =: [0,T]\subset [0,\infty).$$
The system has $n_{r}\in\mathbb{N}$ shared resources that are to be split between all CPU cores in a manner such that all tasks meet their deadlines, their execution rates being dependent on their received allocation of the shared resources. From the physical configuration of the system, there is a total amount $R^{\rm{max}}_j$ of each resource $j\in\llbracket n_r\rrbracket$. The total system resources can then be represented by the vector $R^{\rm{max}}\in\R_+^{n_r}$.

We may thus represent the total allocation of all shared resources at time $t\in[0,T]$ as a matrix $R(t)\in\R_{+}^{s \times n_r}$. The $i$th row of this matrix, denoted as $R_i(t)\in\R_+^{n_r}$, is the total shared resource allocation to task $i$. Likewise, the $j$th column of this matrix, denoted as $R^j(t)\in\R_+^s$, is the allocation of resource $j$ among all tasks.

Note that if all of resource $j$ is used, then $R^j(t)/R^{\rm{max}}_{j}\in\Delta^{s-1}$, the standard simplex\footnote{For $s\in\mathbb{N}_{\geq 2}$, the standard simplex $\Delta^{s-1}$ is the convex hull of standard basis vectors in $\mathbb{R}^s$.}. Thus the normalized set of all valid allocations of resource $j$ is the convex hull $\rm{conv}(\mathbf{0},\Delta^{s-1})$.

\subsection{Resource-dependent Task Dynamics}\label{subsec:Taskdynamics}
\noindent For each task or program $i\in\llbracket s\rrbracket$, we refer to its execution progress, measured in total fraction of instructions retired at time $t\in[0,T]$ as $x_i(t)\in[0,1]$.

The corresponding normalized \emph{rate} of instructions retired (in instructions executed per second) is $\dot{x}_i(t)$. This latter quantity varies with both the state of program execution, represented by $x_i$, and the current resource allocation, given by $R_i$. For example, a program may be slow to execute in the beginning as data is loaded to memory, then faster once computation begins. Similarly, allocation of some resources may affect the execution rate far more so than other -- e.g., increasing memory bus bandwidth when the program is CPU-bound will have little to no effect. This dependence is expressed via the ordinary differential equation 
\begin{align}
    \dot{x}_i(t) = f_i(x_i(t),R_i(t)),
    \label{dynamicsTaski}
\end{align}
wherein for each task $i$, the vector field $f_i$ quantifies the rate of instructions retired over time as a function of (cumulative) current instructions retired $x_i(t)$, and current resource allocation $R_i(t)$. In other words, \eqref{dynamicsTaski} captures the \emph{dynamics} of task $i$.

In practice, the vector fields $f_i$ can be learned from empirical program execution data. However, the following properties can be assumed on $f_i$:
\begin{itemize}
    \item $f_i$ has no explicit dependence on other tasks' allocations, nor on the current time;
    \item $f_i$ is increasing in $R_i$, so giving a task more resources will not decrease its performance;
    \item $f_i$ is nonnegative.
\end{itemize}
We next outline the optimal control problem formulation.
 
\subsection{Optimal Control Formulation}\label{subsec:OCP}
\noindent We desire to find a time-varying allocation $R$ of all shared resources across given $s\in\mathbb{N}_{\geq 2}$ CPU cores to all tasks such that all tasks meet their deadlines -- we call such an allocation \emph{feasible}. Moreover, we desire to do so in some \emph{optimal} sense with respect to (w.r.t.) a performance objective $J$, which for an allocation $R$ has `cost' $J(R)$. Thus the problem of interest becomes

\begin{subequations}\label{eq:problem}
\begin{align}
    \underset{R:[0,T]\to\R_+^{s\times n_r}}{\text{minimize}} &\qquad J(R)\label{eq:problem_objective}\\
    \text{\quad subject to}\:\:\quad & \begin{aligned}
        &\dot{x}_i(t) = f_i(x_i(t),R_i(t)),\\
        & x_i(t) = 0 \quad \forall t\leq t_{i}^{\text{start}},\\
        & x_i(t) = 1 \quad \forall t\geq t_{i}^{\text{end}},\\
        & \bm{0} \preceq \sum_i R_i\preceq R^{\max},
    \end{aligned} \quad \forall i\in\llbracket s\rrbracket,\label{eq:problem_constraints}
\end{align}
\end{subequations}
where the constraints \eqref{eq:problem_constraints} fix the dynamics of each task, require that each task starts no earlier than its release time and ends no later than its deadline, and enforce resource limits. If these constraints are feasible, i.e. if at least one feasible allocation $R$ exists, an allocation which minimizes the objective \eqref{eq:problem_objective} is called \emph{optimal}.

\subsection{Performance Objective}\label{subsec:PerformanceObjective}
\noindent The objective $J$ is motivated by practical goals of system design and utilization, such as minimizing
\begin{itemize}
    \item the amount of resources required by the system while maintaining schedulability;
    \item cumulative use of a fixed set of resources;
    \item energy usage by a system.
\end{itemize}

\section{Analysis}\label{subsec:Analysis}
\noindent 

To begin our analysis for the OCP \eqref{eq:problem}, let us define the instantaneous set of admissible controls
\begin{equation}\label{eq:feasible_set_inst}
\mathcal{U}:=\left\{ R\in\R^{s\times n_r}_+ \mid \sum_iR_i\preceq R^{\rm{max}}\right\},
\end{equation}
and the total admissible control set
\begin{equation}\label{eq:feasible_set}
    \mathcal{F} := \left\{ R(\cdot) \mid \begin{aligned}
    R(t)\in\mathcal{U}&\quad\forall t\in[0,T] \\
    x_i(t^{\rm{end}}_i,R_i) \geq 1 &\quad\forall i\in\llbracket s\rrbracket
\end{aligned} \right\}.
\end{equation}
The inequality in \eqref{eq:feasible_set} is a slackening of the constraints \eqref{eq:problem_constraints} that will be useful in the development that follows.

\subsection{Separable Under-approximation}\label{subsec:Analysis_underapprox}
\noindent We consider the following assumption: for each $f_i,i\in\llbracket s\rrbracket$, there exists a nonnegative under-approximator $\underline{f_i}$, i.e., 
$$ 0\leq \underline{f_i}(x_i,R_i) \leq f_i(x_i,R_i),$$
which decomposes as
\begin{equation}\label{eq:assumption_underapprox}
    \underline{f_i}(x_i,R_i) := g_i(x_i)\cdot h_i(R_i)
\end{equation}
for some continuous and bounded functions
$$ g_i:[0,1]\to\mathbb{R}_{++} \quad,\quad h_i:\mathbb{R}_+^{n_r}\to\mathbb{R}_+\:. $$
Such an $\underline{f_i}$ is then a \emph{separable under-approximator} of the real task dynamics $f_i$. In words, $\underline{f_i}$ entails a lower rate of execution progress than $f_i$. We formalize this in the following Proposition \ref{prop:underapprox_feas}.
\begin{proposition}\label{prop:underapprox_feas}
    Let $\underline{\mathcal{F}}$ be the feasible control set of \eqref{eq:problem} under the dynamics $\underline{f_i}$. Then, any solution of \eqref{eq:problem} under the dynamics $\underline{f_i}$ is also feasible under the dynamics $f_i$. I.e, $$ \underline{\mathcal{F}} \subset \mathcal{F} \:. $$
\end{proposition}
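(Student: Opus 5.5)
We will argue by a scalar comparison of the two trajectories driven by the same control. Fix $R(\cdot)\in\underline{\mathcal{F}}$. The pointwise condition $R(t)\in\mathcal{U}$ for all $t\in[0,T]$ does not involve the dynamics, so it holds for membership in $\mathcal{F}$ as well. It therefore suffices to show that, for each $i\in\llbracket s\rrbracket$, the progress $x_i(\cdot,R_i)$ under $f_i$ dominates the progress $\underline{x}_i(\cdot,R_i)$ under $\underline{f_i}$. Both start from the same initial condition $x_i(t_i^{\rm start})=\underline{x}_i(t_i^{\rm start})=0$. Then $x_i(t_i^{\rm end},R_i)\geq \underline{x}_i(t_i^{\rm end},R_i)\geq 1$.

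\textbf{Comparison step.} Fix $i$ and $R_i(\cdot)$, and consider the one-dimensional nonautonomous fields $F(t,x):=f_i(x,R_i(t))$ and $\underline{F}(t,x):=g_i(x)h_i(R_i(t))$, which satisfy $0\leq\underline{F}\leq F$ pointwise. We invoke the standard comparison lemma for scalar ODEs. Since $\dot{\underline{x}}_i=\underline{F}(t,\underline{x}_i)\leq F(t,\underline{x}_i)$, $\underline{x}_i$ is a subsolution of $\dot x=F(t,x)$ with the same initial value. Hence $\underline{x}_i(t)\leq x_i(t)$ on $[t_i^{\rm start},t_i^{\rm end}]$.

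\textbf{Main obstacle.} The comparison lemma needs uniqueness, e.g.\ local Lipschitz continuity in $x$, of the dominating field $F$, and the paper has not stated this. The obstacle is specific to scalar ODEs: the usual counterexamples to comparison without uniqueness are of the form $\dot x=\sqrt{|x|}$, and they are avoided only if uniqueness or a strict inequality is available.

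I would first handle this through the structure of $\underline{f_i}$. Since $g_i>0$ is continuous on the compact set $[0,1]$, $\underline{x}_i$ can be reparametrized by $G_i(x):=\int_0^x d\xi/g_i(\xi)$. This gives $G_i(\underline{x}_i(t))=\int_{t_i^{\rm start}}^t h_i(R_i(\tau))\,d\tau$, so $\underline{x}_i$ is uniquely determined and monotone. For $f_i$, I would assume well-posedness, namely that $x_i$ is the unique solution, as is implicit in the notation $x_i(t,R_i)$. I would then argue by contradiction. Let $t^*$ be the last time at which $x_i=\underline{x}_i$ before a putative crossing. Comparing against the perturbed field $\underline{F}-\varepsilon$, which is strictly below $F$, gives a strict-inequality comparison, and letting $\varepsilon\to 0$ uses continuity of the solution of the separable ODE in $\varepsilon$, which is explicit via $G_i$.

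If needed, one further technicality is that $x_i$ is capped at $1$. Since $\underline{x}_i(t_i^{\rm end})\geq 1$ and $x_i\geq\underline{x}_i$ up to the first time $x_i$ reaches $1$, the conclusion $x_i(t_i^{\rm end},R_i)\geq 1$ holds regardless of any saturation convention. This establishes $\underline{\mathcal{F}}\subset\mathcal{F}$.
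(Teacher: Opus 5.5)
Your proof is correct and is the paper's argument: fix $R\in\underline{\mathcal{F}}$, compare the two scalar trajectories driven by the same $R_i$ from the common initial value $0$ via the comparison lemma, and read off $x_i(t_i^{\mathrm{end}})\geq\underline{x}_i(t_i^{\mathrm{end}})\geq 1$. The paper simply cites Khalil's Lemma 3.4 and does not discuss the Lipschitz/uniqueness hypothesis you raise.

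One caution on your optional repair: the perturbed field $g_i(x)h_i(R_i(t))-\varepsilon$ is no longer separable, so the $\varepsilon\to 0$ limit is not explicit via $G_i$. A cleaner route is to apply $G_i$ to the \emph{true} trajectory, since $\frac{d}{dt}G_i(x_i(t))=f_i(x_i,R_i)/g_i(x_i)\geq h_i(R_i(t))$. Up to the first time $x_i$ reaches $1$, this integrates to $G_i(x_i(t))\geq\int_{t_i^{\mathrm{start}}}^{t}h_i(R_i(\tau))\,d\tau$. Combined with $\int_{t_i^{\mathrm{start}}}^{t_i^{\mathrm{end}}}h_i(R_i(t))\,dt\geq W_i=G_i(1)$, it gives $x_i(t_i^{\mathrm{end}})\geq 1$ with no uniqueness or Lipschitz assumption on $f_i$ at all.
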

\begin{proof}
    Let $R\in\underline{\mathcal{F}}$. Since $\underline{f_i}\leq f_i$ pointwise, and $\underline{{x}_i}(t_i^{\text{start}}) = {x}_i(t_i^{\text{start}}) = 0$, by the comparison lemma \cite[Lemma 3.4]{khalil2002nonlinear}, we have
    $$ \underline{x_i}(t) \leq x_i(t) \quad\forall t\in[t_i^{\text{start}},t_i^{\text{end}}]\:.$$
    Therefore,
    $$ \underline{x_i}(t_i^{\text{end}})\geq 1 \quad\Longrightarrow\quad {x_i}(t_i^{\text{end}})\geq 1\:, $$
    which shows that $R\in\mathcal{F}$, and the statement follows.
\end{proof}
We next show that the separability of $\underline{f_i}$ allows us to express feasibility in a different way.
\begin{lemma}\label{lemma:transformed_feas}
    Let  
    $$ G_i(x) :=\int_0^x\frac{d\xi}{g_i(\xi)}\quad\text{and}\quad W_i:= G_i(1)\quad\forall i\in\llbracket s\rrbracket. $$
    Then, 
    $$ R\in\underline{\mathcal{F}} \quad\Longleftrightarrow\quad \int_{t_i^{\mathrm{start}}}^{t_i^{\mathrm{end}}}h_i(R_i(t))dt \geq W_i \quad\forall i\in\llbracket s\rrbracket.$$
\end{lemma}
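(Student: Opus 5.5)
The plan is separation of variables along each trajectory, followed by the monotonicity of $G_i$. Fix $R\in\mathcal{U}$ pointwise in time and a task $i$. Let $\underline{x_i}$ be the solution of $\dot{\underline{x_i}}=g_i(\underline{x_i})h_i(R_i(t))$ with $\underline{x_i}(t_i^{\text{start}})=0$. Since $g_i$ is continuous and strictly positive on the compact set $[0,1]$, it is bounded below by some $\underline{g}_i>0$. Hence $G_i$ is well defined, $C^1$ and strictly increasing, with $G_i'=1/g_i$ and $G_i(0)=0$. Consequently $W_i=G_i(1)$ is finite and positive, and $\underline{x_i}(t)\geq 1 \iff G_i(\underline{x_i}(t))\geq W_i$ for $\underline{x_i}$ in the range where $G_i$ is defined. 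To handle trajectories that might try to exceed $1$, I would extend $g_i$ continuously and positively beyond $[0,1]$, for instance by the constant $g_i(1)$. Alternatively, I would note that the state reaches $1$ in finite time and stays capped, which is consistent with the constraint $x_i(t)=1$ for $t\geq t_i^{\text{end}}$.

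The key step is the chain rule. Because $h_i\circ R_i$ is measurable, nonnegative and bounded ($h_i$ is bounded), the ODE is Carathéodory and $\underline{x_i}$ is absolutely continuous. Then $\frac{d}{dt}G_i(\underline{x_i}(t))=\dot{\underline{x_i}}(t)/g_i(\underline{x_i}(t))=h_i(R_i(t))$ almost everywhere. Integrating from $t_i^{\text{start}}$ to $t_i^{\text{end}}$ gives
\[
G_i\bigl(\underline{x_i}(t_i^{\text{end}})\bigr)=\int_{t_i^{\text{start}}}^{t_i^{\text{end}}} h_i(R_i(t))\,dt .
\]
Combining this identity with the monotonicity equivalence yields $\underline{x_i}(t_i^{\text{end}})\geq 1 \iff \int h_i(R_i)\,dt\geq W_i$. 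Since the set $\underline{\mathcal{F}}$ of \eqref{eq:feasible_set} (under the dynamics $\underline{f_i}$) is exactly $R(t)\in\mathcal{U}$ together with these $s$ terminal inequalities, the lemma follows by taking the conjunction over $i\in\llbracket s\rrbracket$.

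I expect the main obstacle to be regularity and well-posedness rather than the computation. Existence and uniqueness of $\underline{x_i}$ are not automatic, because $g_i$ is only continuous, not Lipschitz. I would sidestep uniqueness using the separable structure. Define $\underline{x_i}(t):=G_i^{-1}\bigl(\int_{t_i^{\text{start}}}^t h_i(R_i)\,d\tau\bigr)$, using the extended $G_i$, which is a bijection onto its range. Verify directly that this is a solution. Then show any absolutely continuous solution must coincide with it, because $G_i\circ x$ satisfies the trivial ODE $\frac{d}{dt}(G_i\circ x)=h_i(R_i)$. This makes the equivalence hold for the (unique) trajectory. A secondary point is assuming enough measurability of $R$ so that the integral makes sense; I would take it as implicit in the admissible class.
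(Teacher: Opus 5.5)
Your proposal is correct and follows the same route as the paper: set $z_i = G_i(\underline{x_i})$, apply the chain rule to get $\dot z_i = h_i(R_i)$, integrate from $t_i^{\mathrm{start}}$ (where $z_i=0$) to $t_i^{\mathrm{end}}$, and conclude using the strict monotonicity of $G_i$ that follows from $g_i>0$. The paper does this computation formally, so your extra care (extending $g_i$ past $x=1$, the Carath\'eodory/absolute-continuity chain rule, and uniqueness via injectivity of $G_i$) only makes rigorous what the paper leaves implicit.
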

\begin{proof}
    Define the transformed state
    $$z_i(t) := G_i(\underline{x_i}(t)),$$
    and its dynamics
    $$\dot{z}_i = G'_i(\underline{x_i})\dot{x}_i = \frac{1}{g_i(\underline{x_i})} g_i(\underline{x_i})h_i(R_i) = h_i(R_i).$$
Since $g_i>0$, $G_i$ is strictly increasing, and so
    $$ \underline{x_i}(t_i^{\text{end}}) \geq 1 \quad\Longleftrightarrow\quad z_i(t_i^{\text{end}}) = G_i(\underline{x_i}(t_i^{\text{end}})) \geq G_i(1) = W_i\:. $$
    As $z_i(t_i^{\text{start}})=0$, 
    $$ z_i(t_i^{\text{end}}) = \int_{t_i^{\text{start}}}^{t_i^{\text{end}}}h_i(R_i(t))dt\:, $$
    completing the proof.
\end{proof}
Additionally, when $h_i$ are concave, then we can deduce the following result in Proposition \ref{prop:feas_piecewise} on the nature of feasible solutions. In practice, this concavity assumption holds naturally due to diminishing returns -- generally the more resources a task has, the less benefit there is to allocating yet more.
\begin{proposition}\label{prop:feas_piecewise}
    Suppose that some $R\in\underline{\mathcal{F}}$ exists, and that $h_i$ in \eqref{eq:assumption_underapprox} are concave for all $i\in\llbracket s\rrbracket$. Then there exists a piecewise-constant $\bar{R}\in\underline{\mathcal{F}}$ which changes only at release/deadline times.
\end{proposition}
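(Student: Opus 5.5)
The plan is to replace $R$ by its time average on each interval between consecutive event times and check, via Lemma \ref{lemma:transformed_feas} and Jensen's inequality, that feasibility is preserved.

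\textbf{Step 1: partition by event times.} Collect all release and deadline times $\{t_i^{\mathrm{start}},t_i^{\mathrm{end}}\}_{i\in\llbracket s\rrbracket}$. Sort them as $0=\tau_0<\tau_1<\dots<\tau_m=T$. On each open interval $I_k:=(\tau_{k-1},\tau_k)$ the set of active tasks is constant. Define
$$\bar{R}(t):=\frac{1}{\tau_k-\tau_{k-1}}\int_{\tau_{k-1}}^{\tau_k}R(\tau)\,d\tau,\qquad t\in I_k,$$
with arbitrary values in $\mathcal{U}$ at the finitely many endpoints, which do not affect any integral. By construction $\bar{R}$ is piecewise constant and changes only at release/deadline times. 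I will assume $R$ is measurable and bounded (it is, since $R(t)\in\mathcal{U}$ is bounded), so these averages exist.

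\textbf{Step 2: admissibility.} The set $\mathcal{U}$ in \eqref{eq:feasible_set_inst} is a polytope, hence closed and convex. An average of a $\mathcal{U}$-valued function therefore lies in $\mathcal{U}$. Concretely, nonnegativity and the inequality $\sum_i R_i\preceq R^{\max}$ are linear constraints and are preserved under integration. Hence $\bar{R}(t)\in\mathcal{U}$ for all $t$.

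\textbf{Step 3: deadline constraints.} By Lemma \ref{lemma:transformed_feas}, it suffices to show
$$\int_{t_i^{\mathrm{start}}}^{t_i^{\mathrm{end}}}h_i(\bar{R}_i(t))\,dt\ \geq\ W_i\quad\text{for all } i.$$
Since $t_i^{\mathrm{start}}$ and $t_i^{\mathrm{end}}$ are themselves event times, the interval $[t_i^{\mathrm{start}},t_i^{\mathrm{end}}]$ is exactly a union of intervals $I_k$. On each such $I_k$, Jensen's inequality for the concave $h_i$ gives
$$|I_k|\,h_i\!\left(\frac{1}{|I_k|}\int_{I_k}R_i\right)\ \geq\ \int_{I_k}h_i(R_i(t))\,dt.$$
Summing over $k$ yields
$$\int_{t_i^{\mathrm{start}}}^{t_i^{\mathrm{end}}}h_i(\bar{R}_i)\,dt\ \geq\ \int_{t_i^{\mathrm{start}}}^{t_i^{\mathrm{end}}}h_i(R_i)\,dt\ \geq\ W_i,$$
where the last inequality holds because $R\in\underline{\mathcal{F}}$, again by Lemma \ref{lemma:transformed_feas}. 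So $\bar{R}\in\underline{\mathcal{F}}$.

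\textbf{Main obstacle: justifying Jensen's inequality.} This step needs $h_i\circ R_i$ to be integrable and $h_i$ to be concave on the convex set $\mathbb{R}_+^{n_r}$, where the average lies. Both follow from the standing assumptions: $h_i$ is continuous and bounded, and $R_i$ is measurable, so $h_i\circ R_i$ is bounded and measurable on a finite interval. The other points to check are that the task intervals align exactly with the partition (true by construction of the $\tau_k$), and that outside a task's active window its allocation is irrelevant, so averaging there does no harm.
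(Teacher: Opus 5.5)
Your proposal is correct and follows essentially the same route as the paper: average $R$ over each inter-event interval, get admissibility from the linearity of the constraints defining $\mathcal{U}$, and preserve the deadline constraints by applying Jensen's inequality for concave $h_i$ on each interval, summing, and invoking Lemma~\ref{lemma:transformed_feas}. Your extra remarks on measurability, integrability of $h_i\circ R_i$, and values at the finitely many endpoints refine the paper's argument without changing it.
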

\begin{proof}
    Let us sort the distinct values $0,T$ along with $\{t_i^{\text{start}},t_i^{\text{end}}\}_{i\in\llbracket s\rrbracket}$ as
    $$ 0=\tau_0<\tau_1<\dots<\tau_{K}=T\:.$$
    We then define the resulting intervals and lengths
    $$ I_k:=[\tau_{k-1},\tau_k)\:,\quad\delta_k:=\tau_k-\tau_{k-1} \quad\forall k\in\llbracket K\rrbracket\:,$$
    as well as the set of tasks running on each interval
    $$ A_k := \{i \mid I_k\subset[t_i^{\text{start}},t_i^{\text{end}}]\}\:. $$
    Now define $\bar{R}$ piecewise as 
    $$ \bar{R}_i(t) = \bar{r}_{ik}\:,\quad\text{where}\quad\bar{r}_{ik} := \frac{1}{\delta_k}\int_{I_k} R_i(t)dt\:  $$
    is the average resource allocation of task $i\in A_k$ over the interval $I_k$. To see that $\bar{R}(t)\in\mathcal{U}$, observe that
    $$ \sum_{i} \bar{R}_i(t)=\sum_{i}\bar{r}_{ik} = \frac{1}{\delta_k}\int_{I_k}\sum_iR_i(t)dt\leq R^{\max}\:, $$
    since $R\in\mathcal{U}$. 
    
    Now, to see that $\bar{R}$ is feasible, we apply Jensen's inequality to $h_i$ concave, to find
    $$ h_i(\bar{r}_{ik}) = h_i\left(\frac{1}{\delta_k}\int_{I_k} R_i(t)dt\right) \geq \frac{1}{\delta_k}\int_{I_k} h_i(R_i(t))dt\:, $$
    and thus
    $$ \delta_k h_i(\bar{r}_{ik}) \geq \int_{I_k} h_i(R_i(t))dt\:. $$
    Since $[t_i^{\text{start}},t_i^{\text{end}}]$ can be partitioned into the intervals $\{I_k\mid i\in A_k\}$, we have
    \begin{align*}
        \int_{t_i^{\text{start}}}^{t_i^{\text{end}}} h_i(\bar{R}_i(t))dt &= \sum_{\{k\mid i\in A_k\}}\delta_k h_i(\bar{r}_{ik}) \\
        &\geq \int_{t_i^{\text{start}}}^{t_i^{\text{end}}} h_i(R_i(t))dt  \geq W_i \:,
    \end{align*}
    where the last inequality follows from feasibility of $R$ and Lemma \ref{lemma:transformed_feas}, whereby the desired result follows.
\end{proof}

Under the stated assumptions, Propositions \ref{prop:underapprox_feas} and \ref{prop:feas_piecewise} together tell us that to ascertain whether a feasible $R\in\mathcal{F}$ exists, it is sufficient to verify that a piecewise-constant $\bar{R}\in\underline{\mathcal{F}}$ as in Proposition \ref{prop:feas_piecewise} exists. This, in turn, requires us to solve the feasibility problem:
\begin{subequations}\label{eq:problem_underfeas}
\begin{align}
    \underset{\{r_{ik}\in\mathbb{R}_+^{n_r}\}_{i\in\llbracket s\rrbracket}^{k\in\llbracket K\rrbracket}}{\text{minimize}} &\qquad 0\label{eq:problem_underfeas_objective}\\
    \text{\quad subject to}\:\:\quad & \begin{aligned}
        & \sum_{\{k\mid i\in A_k\}}\delta_k h_i({r}_{ik}) \geq W_i,\\
        & \bm{0} \preceq r_{ik}\:,\: \sum_i r_{ik}\preceq R^{\max},
    \end{aligned} \quad \forall i\in\llbracket s\rrbracket.\label{eq:problem_underfeas_constraints}
\end{align}
\end{subequations}
In fact, letting $\underline{\mathcal{F}_{\rm{pw}}}$ be the set of piecewise-constant allocations satisfying \eqref{eq:problem_underfeas_constraints}, Corollary \ref{cor:pw_alloc_suff} next shows that the non-emptiness of this set is \emph{equivalent} to that of $\underline{\mathcal{F}}$. 

\begin{corollary}\label{cor:pw_alloc_suff}
$$ \underline{\mathcal{F}_{\rm{pw}}}\neq\emptyset\quad\Longleftrightarrow\quad\underline{\mathcal{F}}\neq\emptyset.$$
\end{corollary}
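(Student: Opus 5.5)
The equivalence splits into two inclusions, and the plan is to prove each one separately using results already established.

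For ($\Leftarrow$), assume $\underline{\mathcal{F}}\neq\emptyset$ and pick any $R\in\underline{\mathcal{F}}$. Concavity of each $h_i$ is assumed throughout this discussion, so Proposition~\ref{prop:feas_piecewise} applies to $R$. It produces the piecewise-constant allocation $\bar{R}$ with values $\bar{r}_{ik}$ on the intervals $I_k$. Its proof already shows two facts: $\bar{r}_{ik}\succeq\bm{0}$, since each is an average of nonnegative vectors, and $\sum_i\bar r_{ik}\preceq R^{\max}$. It also shows $\sum_{\{k\mid i\in A_k\}}\delta_k h_i(\bar r_{ik})\geq W_i$. These are exactly the constraints \eqref{eq:problem_underfeas_constraints}, so $\bar R\in\underline{\mathcal{F}_{\rm pw}}$.

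One small point needs care. For $i\notin A_k$ the proof of Proposition~\ref{prop:feas_piecewise} defines $\bar r_{ik}$ as the average of $R_i$ over $I_k$, which may be nonzero. I would instead set $\bar r_{ik}=\bm 0$ for such $k$. This only decreases the column sums, so the resource bound still holds, and it does not change the deadline sums, which range only over $k$ with $i\in A_k$.

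For ($\Rightarrow$), take $\{r_{ik}\}$ satisfying \eqref{eq:problem_underfeas_constraints} and define $R_i(t):=r_{ik}$ for $t\in I_k$. Setting $R(T)$ arbitrarily in $\mathcal U$, for instance $\bm 0$, does not affect any integral. The constraints $\bm 0\preceq r_{ik}$ and $\sum_i r_{ik}\preceq R^{\max}$ give $R(t)\in\mathcal U$ for all $t\in[0,T]$.

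Next, $[t_i^{\rm start},t_i^{\rm end}]$ is, up to the single endpoint, the disjoint union of the $I_k$ with $i\in A_k$, because every release time and deadline appears among the $\tau_k$. Hence
\[
\int_{t_i^{\rm start}}^{t_i^{\rm end}}h_i(R_i(t))\,dt=\sum_{\{k\mid i\in A_k\}}\delta_k h_i(r_{ik})\geq W_i .
\]
Lemma~\ref{lemma:transformed_feas} then gives $R\in\underline{\mathcal F}$. By Proposition~\ref{prop:underapprox_feas}, $R\in\mathcal F$ as well, although only $\underline{\mathcal F}\neq\emptyset$ is needed here.

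The main obstacle is bookkeeping rather than analysis. The interval partition identity must be stated cleanly, and the convention for tasks inactive on $I_k$ must be fixed so that the resource constraint in \eqref{eq:problem_underfeas_constraints} matches membership in $\mathcal U$. The concavity hypothesis is used only in the $(\Leftarrow)$ direction, through Proposition~\ref{prop:feas_piecewise}, and I would state that explicitly.
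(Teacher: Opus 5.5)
Your proposal is correct and follows the paper's proof. The direction $(\Leftarrow)$ is Proposition~\ref{prop:feas_piecewise}, and $(\Rightarrow)$ is the observation, via Lemma~\ref{lemma:transformed_feas}, that any $\{r_{ik}\}$ satisfying \eqref{eq:problem_underfeas_constraints} defines a piecewise-constant control in $\underline{\mathcal{F}}$; you simply spell out the interval bookkeeping and the dependence on concavity. Your zeroing of $\bar r_{ik}$ for $i\notin A_k$ is harmless but unnecessary. The averaged values are already nonnegative, and the bound $\sum_i \bar r_{ik}\preceq R^{\max}$ in the proof of Proposition~\ref{prop:feas_piecewise} is taken over all $i$.
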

\begin{proof}
The direction $(\Rightarrow)$ follows since members of $\underline{\mathcal{F}_{\rm{pw}}}$ define a feasible control under the dynamics $\underline{f_i}$. The direction $(\Leftarrow)$ was proven in Proposition \ref{prop:feas_piecewise}.
\end{proof}
Furthermore, we have the following result.
\begin{lemma}\label{lemma:pw_alloc_convex}
$\underline{\mathcal{F}_{\rm{pw}}}$, the constraint set of \eqref{eq:problem_underfeas}, is bounded and convex. If all $h_i$ are concave as before, then $\underline{\mathcal{F}_{\rm{pw}}}$ is also closed ($\Rightarrow$ compact).
\end{lemma}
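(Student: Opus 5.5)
I will identify $\underline{\mathcal{F}_{\rm{pw}}}$ with the set of finite-dimensional vectors $r=\{r_{ik}\}\in\R_+^{sKn_r}$ satisfying \eqref{eq:problem_underfeas_constraints}. Each piecewise-constant allocation is determined by these values on the intervals $I_k$. All three properties can then be checked in this Euclidean space. The plan is to write $\underline{\mathcal{F}_{\rm{pw}}}=P\cap\bigcap_{i}S_i$. Here $P$ is the polyhedron defined by $\bm{0}\preceq r_{ik}$ and $\sum_i r_{ik}\preceq R^{\max}$ for all $k$, and $S_i:=\{r\mid \phi_i(r)\geq W_i\}$ with
$$\phi_i(r):=\sum_{\{k\mid i\in A_k\}}\delta_k h_i(r_{ik}).$$

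\emph{Boundedness.} This uses only $P$. Each component satisfies $0\leq (r_{ik})_j\leq \sum_{i'} (r_{i'k})_j\leq R^{\max}_j$. Hence $P\subset\prod_{i,k}[\bm{0},R^{\max}]$, a bounded box, and $\underline{\mathcal{F}_{\rm{pw}}}\subset P$ is bounded as well.

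\emph{Convexity.} $P$ is an intersection of finitely many half-spaces, so it is convex. For $S_i$, I will use that $\phi_i$ is a nonnegative combination ($\delta_k>0$) of the functions $r\mapsto h_i(r_{ik})$. Each of these is concave when $h_i$ is concave, being a concave function composed with a linear coordinate projection. So $\phi_i$ is concave, and its superlevel set $S_i$ is convex. The intersection of convex sets is then convex.

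This step is where I expect the main subtlety. Convexity of the superlevel set genuinely needs concavity, or at least quasi-concavity of $\phi_i$. Mere monotonicity of $h_i$ does not suffice, since a sum of monotone non-concave terms can have non-convex superlevel sets. I would therefore state explicitly that the convexity argument invokes the standing concavity hypothesis on $h_i$ used in Proposition \ref{prop:feas_piecewise}. If one wanted a version without concavity, I would first try to check whether single-resource structure ($n_r=1$) rescues it. I would not expect it to hold in general.

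\emph{Closedness.} $P$ is closed as a finite intersection of closed half-spaces. Each $h_i$ is continuous by the assumption following \eqref{eq:assumption_underapprox}, so $\phi_i$ is continuous. Then $S_i=\phi_i^{-1}([W_i,\infty))$ is the preimage of a closed set and hence closed. A finite intersection of closed sets is closed. Combined with boundedness, Heine--Borel gives compactness.

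Note that closedness only needs continuity, not concavity, although concavity on the open orthant also yields continuity there. I would remark on this so that the closed/compact claim is seen to hold under the continuity hypothesis already in force.
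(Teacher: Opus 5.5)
Your proof is correct and follows the same route as the paper. The box constraints give boundedness, concavity of $\sum_{\{k\mid i\in A_k\}}\delta_k h_i(r_{ik})$ makes its superlevel sets convex, and Heine--Borel gives compactness. The one difference is where continuity comes from: the paper deduces it from concavity on an open domain, which only covers the interior of the orthant, whereas your appeal to the standing continuity of $h_i$ also covers boundary points such as $r_{ik}=\bm{0}$. Your remark that convexity also needs concavity, not just closedness, matches what the paper's proof actually uses, even though the statement's wording does not reflect it.
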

\begin{proof}
    First, notice that the set $\underline{\mathcal{F}_{\rm{pw}}}$ in \eqref{eq:problem_underfeas_constraints} is finite dimensional. The constraints $0\preceq r_{ik}$ and $\sum_i r_{ik}\preceq R^{\max}$ are closed, bounding and affine in $\{r_{ik}\}$. 
    
    The function $\sum_{\{k\mid i\in A_k\}}\delta_k h_i({r}_{ik})$, being a nonnegative sum of concave functions (defined on the interior of the nonnegative orthant) of $\{r_{ik}\}$, is also concave, and thus continuous (since their domains are open convex sets). Its superlevel sets, which make up the remaining constraint, are therefore also convex and closed. Thus the intersection of these constraining sets, namely $\underline{\mathcal{F}_{\rm{pw}}}$, is itself convex and closed.
\end{proof}

We summarize the developments above in the following Theorem \ref{thm:underapprox_feas}.
\begin{theorem}\label{thm:underapprox_feas}
    For each $i\in\llbracket s\rrbracket$, if there exists a separable under-approximation
    $$ 0\leq \underline{f_i}(x_i,R_i) = g_i(x_i)h_i(R_i) \leq f_i(x_i,R_i) $$
    where $g_i>0$ and $h_i\geq 0$ concave, then the \emph{convex feasibility problem} \eqref{eq:problem_underfeas} guarantees the existence of a feasible allocation $R\in\mathcal{F}$. In other words,
    $$ \eqref{eq:problem_underfeas}\text{ feasible }\quad\Longleftrightarrow\quad \underline{\mathcal{F}}\neq\emptyset\quad\Longrightarrow\quad\mathcal{F}\neq\emptyset. $$
    When the upder-approximation is lossless ($\underline{f_i}=f_i$), the right arrow above becomes bidirectional.
\end{theorem}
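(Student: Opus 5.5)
The theorem combines results already established, so I would prove it by chaining them. There are three pieces: the equivalence between feasibility of \eqref{eq:problem_underfeas} and $\underline{\mathcal{F}}\neq\emptyset$, the implication $\underline{\mathcal{F}}\neq\emptyset\Rightarrow\mathcal{F}\neq\emptyset$, and the reverse implication in the lossless case. Separately, I would justify the phrase ``convex feasibility problem''.

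For the first equivalence, I would observe that feasibility of \eqref{eq:problem_underfeas} means exactly that $\underline{\mathcal{F}_{\rm pw}}\neq\emptyset$. Corollary \ref{cor:pw_alloc_suff} then gives $\underline{\mathcal{F}_{\rm pw}}\neq\emptyset\Leftrightarrow\underline{\mathcal{F}}\neq\emptyset$. The nontrivial direction $(\Leftarrow)$ uses concavity of $h_i$ through Proposition \ref{prop:feas_piecewise}. The direction $(\Rightarrow)$ needs a short check. Given $\{r_{ik}\}$, define $R_i(t)=r_{ik}$ on $I_k$ for $i\in A_k$, and $R_i(t)=\bm 0$ outside $[t_i^{\rm start},t_i^{\rm end}]$. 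Then $R(t)\in\mathcal{U}$, and $\int_{t_i^{\rm start}}^{t_i^{\rm end}}h_i(R_i)\,dt=\sum_{k: i\in A_k}\delta_k h_i(r_{ik})\ge W_i$. By Lemma \ref{lemma:transformed_feas}, $R\in\underline{\mathcal{F}}$. Convexity of the problem is exactly Lemma \ref{lemma:pw_alloc_convex}: the objective is constant and the constraint set is convex.

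For the middle implication, I would invoke Proposition \ref{prop:underapprox_feas}, which gives $\underline{\mathcal{F}}\subset\mathcal{F}$. Hence $\underline{\mathcal{F}}\neq\emptyset$ implies $\mathcal{F}\neq\emptyset$.

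For the lossless case, if $\underline{f_i}=f_i$ for all $i$, then the dynamics defining $\underline{\mathcal{F}}$ and $\mathcal{F}$ coincide. The state trajectories $\underline{x_i}$ and $x_i$ therefore agree for every $R$, and $\underline{\mathcal{F}}=\mathcal{F}$, so the arrow reverses.

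The only delicate point I expect is well-posedness. I have to make sure that the trajectories $x_i(\cdot,R_i)$ are defined and unique, so that the comparison lemma and the change of variables $z_i=G_i(\underline{x_i})$ apply when $R$ is merely measurable or piecewise constant. For this I would rely on continuity and positivity of $g_i$: $G_i$ is then a $C^1$ strictly increasing bijection onto $[0,W_i]$ with $W_i<\infty$, since $g_i$ is continuous and positive on the compact interval $[0,1]$ and hence bounded below. The transformed state then solves $\dot z_i=h_i(R_i)$ in the Carathéodory sense. No new estimates are needed beyond the cited results.
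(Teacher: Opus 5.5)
Your proposal is correct and takes the same route as the paper, which gives no separate proof but presents the theorem as a summary of the preceding results, chained exactly as you chain them: Corollary~\ref{cor:pw_alloc_suff} (via Proposition~\ref{prop:feas_piecewise}) for the equivalence, Proposition~\ref{prop:underapprox_feas} for the implication, Lemma~\ref{lemma:pw_alloc_convex} for convexity, and $\underline{\mathcal{F}}=\mathcal{F}$ in the lossless case. Your explicit check of the $(\Rightarrow)$ direction of the corollary and your argument for the lossless case fill in steps the paper leaves implicit.
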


\subsection{Existence}\label{subsec:Analysis_existence}
\noindent Theorem \ref{thm:underapprox_feas} says that if a piecewise-constant allocation exists for the underestimated task dynamics $\underline{f_i}$, then this same allocation will be feasible for the real dynamics $f_i$. This is not enough, however, to prove the existence of an \emph{optimal} allocation, i.e. a solution to the OCP \eqref{eq:problem}, as there is no guarantee that the minimizer of objective $J$ over $\mathcal{F}$ lies in $\underline{\mathcal{F}_{\rm{pw}}}$. We now turn to identify conditions where this is indeed the case.

The natural condition therefor is that $J$ cannot increase with intra-event averaging. This is shown in Theorem \ref{thm:suff_pw} next. For this purpose, we define the averaging operator $\mathcal{A}$ from the proof of Proposition \ref{prop:feas_piecewise} as
$$ (\mathcal{A}[R])_i(t) := \overline{r}_{ik} = \frac{1}{\delta_k}\int_{I_k}R_i(s)ds \quad t\in I_k\:,$$
to succinctly write this condition as
\begin{equation}\label{eq:cond_suff_pw}
J(\mathcal{A}[R]) \leq J(R) \:.
\end{equation}
Thereunder, the minimal value of the objective \eqref{eq:problem_objective} must be attained by a piecewise-constant allocation.

\begin{theorem}\label{thm:suff_pw}
    Assume that condition \eqref{eq:cond_suff_pw} holds on $J$ for all $R\in\underline{\mathcal{F}}$. Then,
    \begin{equation}
        \inf_{R\in\underline{\mathcal{F}}} J(R) = \inf_{R\in\underline{\mathcal{F}_{\rm{pw}}}} J(R)\:.
    \end{equation}
    If additionally $J$ is lower semicontinuous over $\underline{\mathcal{F}_{\rm{pw}}}$, then this infimum is attained by a minimizer in $\underline{\mathcal{F}_{\rm{pw}}}$.
\end{theorem}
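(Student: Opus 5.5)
The plan is to prove the equality of infima by two inequalities, and then handle attainment with a compactness argument based on Lemma \ref{lemma:pw_alloc_convex}.

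\emph{Inclusion direction.} First observe that $\underline{\mathcal{F}_{\rm{pw}}}\subset\underline{\mathcal{F}}$. Any piecewise-constant allocation satisfying \eqref{eq:problem_underfeas_constraints} lies in $\mathcal{U}$ pointwise, and its integral $\int_{t_i^{\rm start}}^{t_i^{\rm end}} h_i(R_i(t))\,dt$ equals $\sum_{\{k\mid i\in A_k\}}\delta_k h_i(r_{ik})\ge W_i$. By Lemma \ref{lemma:transformed_feas} it is therefore in $\underline{\mathcal{F}}$; this is the $(\Rightarrow)$ direction of Corollary \ref{cor:pw_alloc_suff}. It follows immediately that $\inf_{\underline{\mathcal{F}}}J\le\inf_{\underline{\mathcal{F}_{\rm{pw}}}}J$.

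\emph{Averaging direction.} Take any $R\in\underline{\mathcal{F}}$. The proof of Proposition \ref{prop:feas_piecewise} shows that $\mathcal{A}[R]$ lies in $\mathcal{U}$ pointwise and satisfies the integral constraints. This step uses concavity of $h_i$, which I will take as a standing assumption carried over from Theorem \ref{thm:underapprox_feas}. The entries $\bar r_{ik}$ then satisfy \eqref{eq:problem_underfeas_constraints}, so $\mathcal{A}[R]\in\underline{\mathcal{F}_{\rm{pw}}}$. Condition \eqref{eq:cond_suff_pw} then gives
\begin{equation*}
\inf_{\underline{\mathcal{F}_{\rm{pw}}}}J\le J(\mathcal{A}[R])\le J(R).
\end{equation*}
Taking the infimum over $R\in\underline{\mathcal{F}}$ yields the reverse inequality, so the two infima are equal. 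If $\underline{\mathcal{F}}=\emptyset$, Corollary \ref{cor:pw_alloc_suff} shows both sets are empty and both infima are $+\infty$.

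\emph{Attainment.} Identify $\underline{\mathcal{F}_{\rm{pw}}}$ with the finite-dimensional set of coefficients $\{r_{ik}\}$. By Lemma \ref{lemma:pw_alloc_convex} this set is compact. Assume it is nonempty, since otherwise there is nothing to prove. A lower semicontinuous function on a nonempty compact set attains its infimum by the Weierstrass extreme value theorem: take a minimizing sequence, extract a convergent subsequence, and apply lower semicontinuity at the limit. The resulting minimizer lies in $\underline{\mathcal{F}_{\rm{pw}}}$, and by the first part it also minimizes $J$ over $\underline{\mathcal{F}}$.

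The main obstacle is that the statement omits the concavity of $h_i$, which is needed both for $\mathcal{A}[R]\in\underline{\mathcal{F}_{\rm{pw}}}$ (through Jensen's inequality) and for closedness in Lemma \ref{lemma:pw_alloc_convex}. I would state it explicitly as inherited from the preceding setting. A further subtlety is closedness at the boundary of the orthant, where Lemma \ref{lemma:pw_alloc_convex} only argues continuity on the interior. There I would instead invoke the assumed continuity of $h_i$ on $\mathbb{R}_+^{n_r}$ from \eqref{eq:assumption_underapprox}, which makes the superlevel set closed directly.
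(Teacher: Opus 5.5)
Your proof is correct and follows essentially the same route as the paper: the inclusion $\underline{\mathcal{F}_{\rm{pw}}}\subset\underline{\mathcal{F}}$ gives one inequality, the averaging operator $\mathcal{A}$ together with \eqref{eq:cond_suff_pw} gives the other, and compactness from Lemma \ref{lemma:pw_alloc_convex} with the Weierstrass theorem gives attainment. Making the inherited concavity of $h_i$ explicit (the paper uses it silently through Proposition \ref{prop:feas_piecewise}), using continuity of $h_i$ on all of $\mathbb{R}_+^{n_r}$ for closedness, and invoking the lower-semicontinuous form of Weierstrass (the paper cites only the continuous version) are tightenings of the same argument rather than a different one.
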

\begin{proof}
The inequality $(\leq)$ holds trivially as $\underline{\mathcal{F}}\supset\underline{\mathcal{F}_{\mathrm{pw}}}$. The reverse holds by condition \eqref{eq:cond_suff_pw}, i.e., $\forall R\in\underline{\mathcal{F}}$, there exists $\mathcal{A}[R]\in\underline{\mathcal{F}_{\rm{pw}}}$ with lower cost in $J$.

Recall that the set $\underline{\mathcal{F}_{\rm{pw}}}$ is finite-dimensional and is compact (by Lemma \ref{lemma:pw_alloc_convex}). By Weierstrass extreme value theorem \cite[p. 89, Theorem 4.16]{rudin1976principles}, $J$ attains its minimum on $\underline{\mathcal{F}_{\rm{pw}}}$.
\end{proof}

Condition \eqref{eq:cond_suff_pw} can be shown to hold for a breadth of costs, such as those penalizing sustained and peak resource use. This is captured in Proposition \ref{prop:obj_form}.

\begin{proposition}\label{prop:obj_form}
    If objective function $J$ takes either the form
    $$ J_1(R) = \int_0^T\mathscr{L}(R(t))dt \quad\text{or}\quad J_2(R)=\max_{t\in[0,T]}\phi(R(t))$$
    with $\mathscr{L}$ and $\phi$ convex, then the condition \eqref{eq:cond_suff_pw} is satisfied.
\end{proposition}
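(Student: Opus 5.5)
The plan is to treat the two objective forms separately, using the same ingredient as in the proof of Proposition~\ref{prop:feas_piecewise}: Jensen's inequality applied on each event interval $I_k$. The averaging operator $\mathcal{A}$ replaces $R$ on $I_k$ by its mean $\bar r_k := \frac{1}{\delta_k}\int_{I_k} R(t)\,dt$, taken row-wise and hence entrywise for the matrix. Since $\mathscr{L}$ and $\phi$ are convex functions of the whole matrix $R(t)\in\mathcal{U}$, Jensen applies directly to the matrix-valued average. Throughout, $R$ is assumed measurable and integrable on $[0,T]$, which is implicit in the definition of $\mathcal{A}$. Because $\mathcal{U}$ is convex, $\bar r_k\in\mathcal{U}$, so $\mathcal{A}[R]$ is admissible, as already checked in the proof of Proposition~\ref{prop:feas_piecewise}.

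For $J_1$, split the integral over the partition $\{I_k\}_{k\in\llbracket K\rrbracket}$ of $[0,T]$:
$$J_1(\mathcal{A}[R]) = \sum_{k} \delta_k\, \mathscr{L}(\bar r_k).$$
Jensen's inequality for the convex $\mathscr{L}$ with respect to the uniform probability measure on $I_k$ gives $\mathscr{L}(\bar r_k)\le \frac{1}{\delta_k}\int_{I_k}\mathscr{L}(R(t))\,dt$. Multiplying by $\delta_k$ and summing over $k$ yields $J_1(\mathcal{A}[R])\le J_1(R)$. The only care needed is integrability of $\mathscr{L}\circ R$. A convex function on the closed convex set $\mathcal{U}$ is bounded below by an affine function, so the right-hand side is well defined, possibly $+\infty$, in which case the inequality is trivial.

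For $J_2$, on each $I_k$ use Jensen again, or more elementarily that a convex function evaluated at an average is at most the supremum of its values:
$$\phi(\bar r_k)\le \frac{1}{\delta_k}\int_{I_k}\phi(R(t))\,dt \le \sup_{t\in I_k}\phi(R(t)) \le J_2(R).$$
Since $\mathcal{A}[R]$ takes only the values $\bar r_k$ (the endpoint $T$ is handled by the last interval's value or by convention), taking the maximum over $k$ gives $J_2(\mathcal{A}[R])\le J_2(R)$.

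The main obstacle is technical rather than conceptual. For measurable rather than continuous $R$, the ``max'' in $J_2$ should be read as a supremum, or as an essential supremum. With the essential supremum the middle inequality still holds, since the integral average is bounded by the essential supremum. I would state this convention explicitly, and likewise note that single points such as $\tau_k$ have measure zero and do not affect either objective.
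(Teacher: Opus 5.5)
Your proposal is correct and follows the same route as the paper: Jensen's inequality on each event interval $I_k$, summed over $k$ for $J_1$, and combined with bounding the interval average by the supremum over $I_k$ for $J_2$. Your extra remarks are points the paper passes over silently, and you handle them correctly: integrability of $\mathscr{L}\circ R$, reading the max as an (essential) supremum for measurable $R$, and the endpoint $T$ not being covered by the half-open intervals.
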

\begin{proof}
    Observe that
    \begin{align*}
        J_1(\mathcal{A}[R]) &= \sum_{k}\int_{I_k}\mathscr{L}(\overline{r}_{k})dt = \sum_k\delta_k\mathscr{L}(\overline{r}_{k}) \\
        &\leq \sum_k\int_{I_k}\mathscr{L}(R(t))dt = J_1(R),
    \end{align*}
    where the inequality above follows from convexity of $\mathscr{L}$ and Jensen's inequality. For $J_2$, we have (again by Jensen)
    $$ \phi(\overline{r}_k) \leq \frac{1}{\delta_k}\int_{I_k}\phi(R(t))dt \leq \max_{t\in I_k}\phi(R(t))\:,$$
    and so $J_2(\mathcal{A}[R])\leq J_2(R)$, as desired. 
\end{proof}

\begin{remark}\label{rmk:J_convexity_vs_pw}
The costs of the forms in Proposition \ref{prop:obj_form} are both convex in $R$, which coupled with the convex set $\underline{\mathcal{F_{\rm{pw}}}}$ (Lemma \ref{lemma:pw_alloc_convex}) makes \eqref{eq:problem} a finite-dimensional convex program, which can be solved via off-the-shelf algorithms. Moreover, we note that neither condition \eqref{eq:cond_suff_pw} nor convexity subsumes the other -- for a candidate cost $J$, both need to be verified.
\end{remark}

\section{Proposed Algorithm}\label{sec:Algorithm}
\noindent Under the assumption that our resource cost $J$ is both convex \emph{and} satisfies \eqref{eq:cond_suff_pw}, we now present our approach to the solution of problem \eqref{eq:problem} as Algorithm \ref{alg:solution} which takes us from empirical data on task resource usage to a locally optimal allocation on a given taskset. 

The proposed Algorithm \ref{alg:solution} first fits true and under-approximated dynamics to each task, then finds the optimal allocation for the underestimated dynamics (a finite-dimensional, convex program with the assumption on $J$, and yielding a piecewise-constant allocation as the optimum, by Theorem \ref{thm:suff_pw}). It then uses this allocation as a starting point for solving the nonlinear, nonconvex program \eqref{eq:problem} with the real dynamics. In the following, we explain the steps in the proposed Algorithm \ref{alg:solution}.

\begin{algorithm}[t!]
\caption{Solving Optimal Resource Allocation Problem}
\begin{algorithmic}[1]
    \Require Profiles $\{\dot{x}_i(\tau)\}_{r\in R^{\text{known}}}^{i\in\llbracket s\rrbracket}$ for all $s$ tasks and for all known contexts $R^{\text{known}}$; resource limits $R^{\max}$;  release/deadline times $\{t_i^{\text{start}},t_i^{\text{end}}\}_{i\in\llbracket s\rrbracket}$; function fitting algorithms \texttt{fit\_dynamics} and \texttt{underapprox}; convex cost functional $J$ satisfying \eqref{eq:cond_suff_pw}; convex and nonconvex program routines \texttt{solve\_convex}, \texttt{solve\_nonconvex}.\newline

    \Statex \textit{Step 0 (dynamics fitting and under-approximation)):}
    \For{$i\in\llbracket s\rrbracket$}
        \State $f_i \gets \texttt{fit\_dynamics}(\{\dot{x}_i(\tau)\}_{r\in R^{\text{known}}}^{i\in\llbracket s\rrbracket})$
        \State $g_i, h_i \gets \texttt{underapprox}(f_i)$
        \State $\underline{f_i}(x,R) \gets g_i(x)\,h_i(R)$
        \State $W_i \gets \displaystyle\int_0^1 \frac{dx}{g_i(x)}$ 
    \EndFor\newline

    \Statex \textit{Step 1 (convex relaxation, Thm.~1 \& Prop.~3):}
    \State $\{r_{ik}\} \gets \texttt{solve\_convex}\big(J,\eqref{eq:problem_underfeas_constraints}\big)$ 
    \newline

    \Statex \textit{Step 2 (warm-started solve under the true dynamics):}
    \State $R_{\mathrm{opt}} \gets \texttt{solve\_nonconvex}\big(J,\{r_{ik}\},\eqref{eq:problem_constraints}\big)$ 
    \newline
\end{algorithmic}
{\bf{Result:}} Optimal allocations $\{r_{ik}\}$ and $R_{\mathrm{opt}}$ for $\eqref{eq:problem}$ with cost $J$ and dynamics $\{\underline{f_i}\}$ and $\{f_i\}$ respectively.
\label{alg:solution}
\end{algorithm}

\subsection*{Step 0: Dynamics fitting and under-approximation}
\noindent  For each task, we begin by fitting continuous dynamics to our empirical profiling data through the \texttt{fit\_dynamics} routine. In our implementation, we perform isotonic piecewise-linear regression with a floor constraint of 0, ensuring that the resulting $f_i$ is monotone in $R$ and nonnegative. In practice a different fit may be used, so long as those conditions are enforced.

The \texttt{underapprox} routine takes $g_i(\cdot):=f_i(\cdot,R^{\max})$, and then finds $h_i$ as the concave minorant of $\min_x\frac{f_i(x,\cdot)}{g_i(x)}$. We then explicitly enforce positivity and nonnegativity of these functions respectively via a floor constraint, although we find in practice this is not needed. The under-approximator dynamics $\underline{f_i}$ and the useful quantity $W_i$ from Lemma \ref{lemma:transformed_feas} are then computed as part of this step as well.

\subsection*{Step 1: Solving the under-approximated problem}
\noindent The constraints \eqref{eq:problem_underfeas_constraints}, obtained through the computed $\{h_i\}$, $\{W_i\}$, release/deadline times, and $R^{\max}$, are fed along with objective $J$ into a finite-dimensional convex optimization solver (\texttt{scipy}'s \texttt{minimize}, in our implementation) to find the optimal resource allocation for the taskset under the under-approximated dynamics $\{\underline{f_i}\}$ -- by Theorem \ref{thm:suff_pw}, this will be the piecewise-constant allocation $\{r_{ik}\}$ which changes only at task release/deadline events.

\subsection*{Step 2: Solving \eqref{eq:problem} with the real dynamics}
\noindent  The solution $\{r_{ik}\}$ of the previous step is important for this step as well -- it proves that a feasible allocation exists (Theorem \ref{thm:underapprox_feas}), and provides a starting point for the solver of \eqref{eq:problem} -- \texttt{solve\_nonconvex}. In our implementation, we used the sequential least-squares quadratic programming (SLSQP) solver provided in \texttt{scipy}, as we found it to be suitable for our constraint set and problem dimension. Starting from $\{r_{ik}\}$, the solver iteratively converges to a locally cost-minimizing allocation, which we call $R_{\mathrm{opt}}$. Although we assume $J$ is convex, the real dynamics $\{f_i\}$ need not form a convex constraint set \eqref{eq:problem_constraints}, so $R_{\textrm{opt}}$ need not be globally optimal. 


\section{Evaluation}\label{sec:Numerics}
\noindent To evaluate the effectiveness of the proposed approach in terms of resource efficiency and safety, measured as objective minimization \eqref{eq:problem_objective} and deadline satisfaction \eqref{eq:problem_constraints}, we conducted a case study on a real multicore platform.

\subsection{Evaluation Setup}
\noindent\textbf{Programs.} Our case study uses benchmarks from the San Diego Vision Benchmark Suite (SD-VBS)~\cite{sd-vbs}, following prior work on multicore resource allocation~\cite{sun2025multi}.
Table~\ref{tab:benchmarks} summarizes the benchmarks. Each benchmark was profiled and executed using the \texttt{vga} input class from SD-VBS.

\noindent\textbf{Platform.}  We performed all measurements and ran all experiments on an Intel Xeon E5-2618L v3 processor with 8 cores, a 20 MB 20-way set associative last-level L3 cache, and a single-channel 8 GB PC-2133 DDR4 DRAM. We consider two shared resources: the LLC and memory bandwidth, each divided into 20 partitions using Intel's Cache Allocation Technology~\cite{cat} and  MemGuard~\cite{memguard}, respectively. Because the Cache Allocation Technology enforces a minimum allocation of 2 partitions per core  on our platform, the set of feasible allocations, prior to normalization, is$$R^{\text{known}} \coloneqq \{2, 3,  \dots, 20\} \times \{1, 2, \dots, 20\}.$$

\noindent\textbf{Profiling.} For each benchmark $i$ in Table~\ref{tab:benchmarks}, we learn its dynamics $f_i$ from empirical profiling data. To collect these data, we pin the benchmark to a single core, assign it a feasible resource vector $r$, and sample its state $x_i(t)$ every 10\,ms. We repeat this procedure five times for each possible resource allocation $r$, and then {compute the corresponding state derivatives $\dot{x}_i(t)$ from the collected time-series data. This yields a dataset of input-output pairs $\big((x_i(t), r),\, \dot{x}_i(t)\big)$ collected over all $r \in R^{\text{known}}$, from which we learn $f_i$.}

\begin{table}[t]
  \centering
  \caption{SD-VBS Benchmarks~\cite{sd-vbs} Used in Sec. \ref{sec:Numerics}}
  \label{tab:benchmarks}
  \begin{tabularx}{\columnwidth}{@{} l X l @{}}
    \toprule
    \textbf{Benchmark} & \textbf{Description} & \textbf{Characteristic} \\
    \midrule
    \texttt{disparity} & Depth perception for  cruise / collision control, tracking  & Data-intensive \\
    
    \texttt{localization} & Estimation  of position in environment from sensor data & Compute-intensive \\
    
    \texttt{sift} & Scale-invariant feature transform for object recognition & Compute-intensive \\
    \texttt{tracking} & Motion extraction from image sequences & Data-intensive \\
    \bottomrule
  \end{tabularx}
\end{table}

\begin{figure*}[htbp]
    \centering
\includegraphics[width=0.99\linewidth]{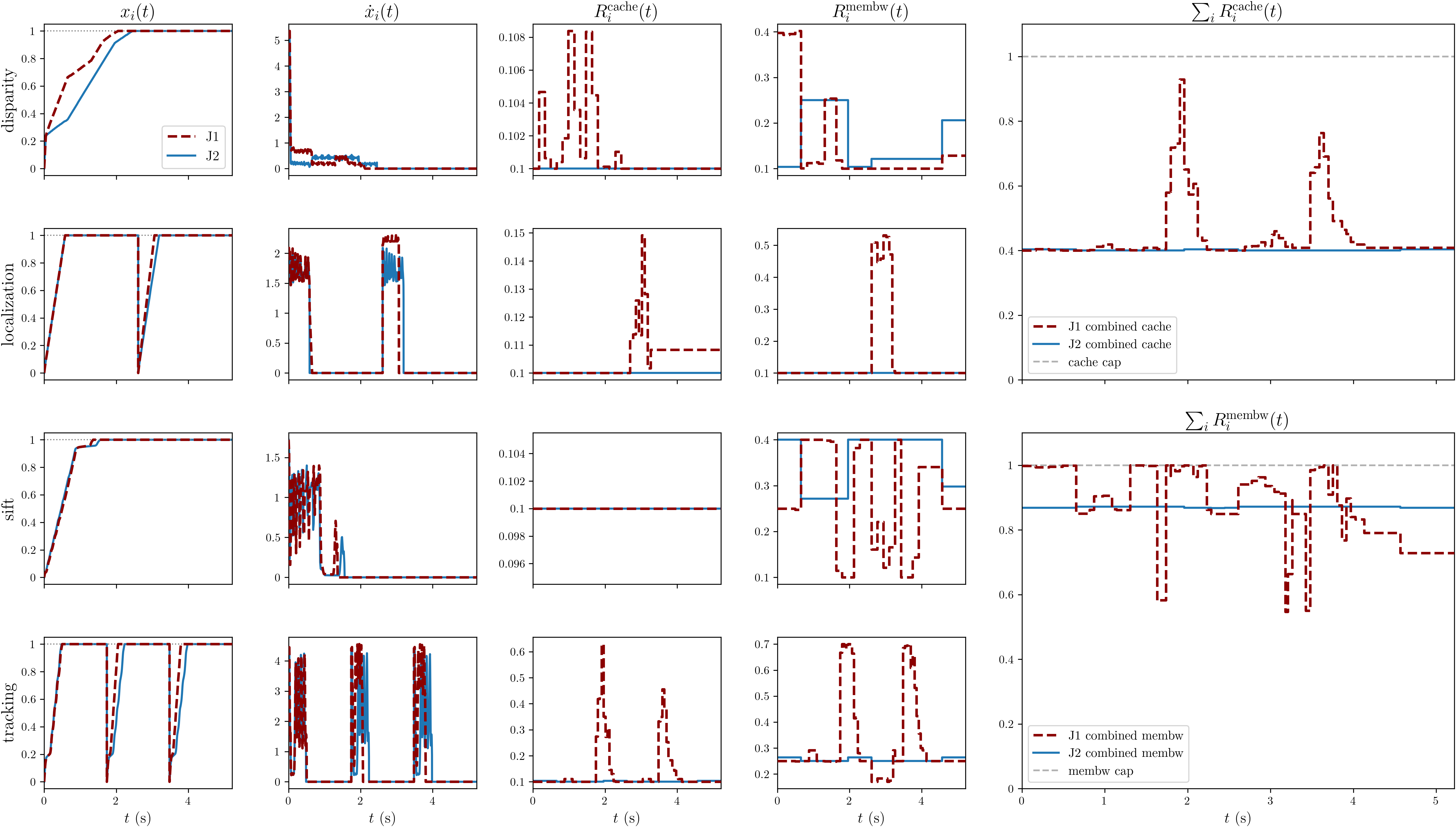}
\caption{Numerical evaluation results for Sec. \ref{sec:Numerics} showing locally optimal resource allocations and resulting execution profiles for the evaluated task set with costs $J_1$ ({{red, dashed}}) and $J_2$ ({{blue}}) as in \eqref{eq:J1_eval} and \eqref{eq:J2_eval} respectively. We can see the former front-loads resources to maximize `dead time', whereas the latter prioritizes a near-constant, low-peak allocation.
}
\label{fig:EndpointPDFs}
\end{figure*}

\noindent\textbf{Task set.} 
The goal of our case study is to compute a resource-optimal schedule for a real-time task set built from the profiled SD-VBS benchmarks, where each task is assigned to its own core. We consider periodic tasks with implicit deadlines and synchronous release; that is, all tasks are released simultaneously, and each task must complete by the start of its next period. Based on the profiled execution times, we set the task periods so that the total {\em reference} utilization is exactly 1.0, where a task's reference utilization is the ratio of its execution time under an even partitioning of shared resources to its period. This yields a system window, defined as the least common multiple of the task periods, of 5.22 seconds. Within each system window, the task set releases 1, 2, 1, and 3 instances of \texttt{disparity}, \texttt{localization}, \texttt{sift}, and \texttt{tracking}, respectively; the corresponding periods are 5.22, 2.61, 5.22, and 1.74 seconds.

\subsection{Numerical Evaluation Results}


\noindent\textbf{Objectives.}
We compute resource-optimal schedules for two objective functions, below:
\begin{align}
    J_1(R) &:= \int_0^T \chi\Big[\, \exists\, i : t\geq t_i^{\rm{start}}\land x_i(t)<1 \,\Big]\, dt, \label{eq:J1_eval}\\
    J_2(R) &:= \max_{t\in[0,T]}\sum_i\bm{1}^\top R_i(t),\label{eq:J2_eval}
\end{align}
where $\chi\left[\cdot\right]$ in \eqref{eq:J1_eval} is the indicator function, in this case equal to 1 if there is at least one task which has been released but has not completed, else 0. I.e., minimizing $J_1$ is equivalent to maximizing the time during which all cores are idle.
This objective encourages the cores to ``race to idle,'' a common energy-aware scheduling strategy that allows the CPU package to enter a low-power sleep state.

Minimizing the second cost, $J_2$, penalizes spikes in resource usage and thus should lead to a shallow, low allocation. In \eqref{eq:J2_eval}, $\bm{1}$ denotes an all-ones vector. Rather than reducing energy consumption on a fixed platform, $J_2$ identifies the minimal hardware resources that guarantee all deadlines are met, allowing designers to provision lower-cost hardware.

Notably, it is easy to see from \eqref{eq:J1_eval} and \eqref{eq:J2_eval} that $J_2$ is of the form $J_2$ in Proposition \ref{prop:obj_form} -- both convex in $R$ and satisfying \eqref{eq:cond_suff_pw} -- and similarly that $J_1$ has neither of these properties. The choice of these two costs in particular is intentional -- $J_2$ is nicely behaved and satisfies our assumptions in this work, whereas $J_1$ is not and does not. We expect to see the latter potentially benefit much more from a continuous allocation than from a piecewise-constant one. Both costs represent real-world design goals of interest to the real-time community.

\noindent\textbf{Results.} We ran Algorithm \ref{alg:solution} for the same taskset with cost $J_2$ as in \eqref{eq:J2_eval}. For $J_1$ as in \eqref{eq:J1_eval} we run a modified version of Step 2 of the algorithm, warm-starting from the feasible piecewise-constant allocation of cost $J_1$. 

The resulting resource-optimal schedules computed for both $J_1$ and $J_2$ are visualized in Fig.~\ref{fig:EndpointPDFs}. We can see that, as one should expect, the cost imposed affects the nature of the optimal allocation -- the $J_1$ solution spikes resource allocation to tasks as soon as they are released in order to quickly complete them and maximize time that the system is idle, whereas $J_2$'s solution prioritizes constant total resource allocation, thereby minimizing system resource requirements. We also observe that, for this taskset, memory bandwidth is the more impactful resource, with both allocations varying it with greater magnitude to control task execution rate -- indeed, $J_2$'s solution keeps the cache allocation almost entirely constant for all tasks throughout the hyperperiod $[0,T]$.

On a Debian 12 machine with an AMD Ryzen 7 5800X CPU and 80GB of RAM, the runtimes for the full algorithm under costs $J_1$ and $J_2$ were $\approx5$ minutes and $\approx 14$ minutes respectively. As expected, the better-structured cost $J_2$ was quicker to solve for.

\subsection{Empirical Evaluation Results}
\noindent We next validate on our hardware platform that the computed schedule preserves schedulability in practice and compare its energy consumption against a static baseline. We consider a baseline allocation that statically allocates the full resource capacity of each shared resource evenly across cores. We execute the $J_1$ and $J_2$-optimal schedules and the baseline on our hardware platform, and measure their energy consumption and schedulability over the system window.

\noindent\textbf{Results.} The results show that all tasks indeed meet their deadlines when executing on the real platform under both the $J_1$ and $J_2$ schedules. The $J_2$-optimal schedule reduces the required number of cache and memory bandwidth partitions by 60\% and 15\%, respectively, relative to the full capacity of 20 partitions per resource. Further, the $J_1$-optimal schedule reduces the controllable energy consumption (energy consumed beyond the static floor) by 6.3\% compared to the baseline. These results demonstrate that our OCP formulation and solver can meaningfully optimize system-level objectives via dynamic resource allocation while guaranteeing timing predictability.

\section{Concluding Remarks}\label{sec:Conclusion} 
\noindent We presented an optimal control approach for dynamic shared-resource allocation in multicore real-time systems with hard deadlines. By modeling task progress as controlled dynamics and optimizing time-varying resource allocations, our approach reduces both provisioned resource capacity and energy consumption while guaranteeing deadline feasibility. Future work will extend the approach to multiple tasks per core and incorporate online adaptation to account for potential discrepancies between profiled and actual task dynamics at runtime.



\balance

\bibliographystyle{IEEEtran}
\bibliography{myreferences}

\end{document}